\documentclass[11pt,a4paper]{amsart}
\usepackage[foot]{amsaddr}
\usepackage{ifxetex}
\ifxetex
  \usepackage[no-math]{fontspec}
\else
\fi
\usepackage{amsmath, amsfonts, amssymb, amsthm}
\usepackage{fullpage}
\usepackage{microtype}
\usepackage[libertine]{newtxmath}
\usepackage[tt=false]{libertine}
\usepackage{caption}
\usepackage{bbm}
\usepackage{hyperref, color}
\hypersetup{colorlinks=true,citecolor=blue, linkcolor=blue, urlcolor=blue}
\usepackage[linesnumbered,boxed,ruled,vlined]{algorithm2e}
\usepackage{bm}
\usepackage{bbm}
\usepackage[numbers]{natbib}
\usepackage{xcolor}
\usepackage{enumerate} 
\usepackage{enumitem}
\usepackage{tabularx}
\usepackage{array}
\usepackage{cleveref}
\newcolumntype{L}[1]{>{\raggedright\arraybackslash}p{#1}}
\newcolumntype{C}[1]{>{\centering\arraybackslash}m{#1}}
\newcolumntype{R}[1]{>{\raggedleft\arraybackslash}p{#1}}
  
\usepackage{makecell}
\usepackage{color}
\usepackage{graphicx}
\usepackage{float}
\usepackage{bbm}

\usepackage{footnote}
\makesavenoteenv{tabular}

\usepackage{thm-restate}
\usepackage{commath}

\newtheorem{theorem}{Theorem}[section]

\newtheorem{lemma}[theorem]{Lemma}

\newtheorem{remark}[theorem]{Remark}
\newtheorem{proposition}[theorem]{Proposition}

\usepackage{comment}

\usepackage{multicol, float}
\usepackage[linesnumbered,boxed,ruled,vlined]{algorithm2e}

\usepackage{thmtools, thm-restate, hyperref, cleveref}

\usepackage{mathtools}

\usepackage{enumitem}

\title{A Simple Distributed Algorithm for Sparse Fractional Covering and Packing Problems}

\date{}

\author{Qian Li}
\author{Minghui Ouyang}
\author{Yuyi Wang}

\address[Qian Li]{Shenzhen International  Center For Industrial  And  Applied  Mathematics, Shenzhen Research Institute of Big Data, China 
\textnormal{E-mail: \url{liqian.ict@gmail.com}}.}
\address[Minghui Ouyang]{School of Mathematical Sciences, Peking University, China \textnormal{E-mail: \url{ouyangminghui1998@gmail.com}}.}
\address[Yuyi Wang]{Lambda Lab, CRRC Zhuzhou, China \textnormal{E-mail: \url{yuyiwang920@gmail.com}}.}

\begin{document}

\maketitle

\begin{abstract}
This paper presents a distributed algorithm in the CONGEST model that achieves a $(1+\epsilon)$-approximation for row-sparse fractional covering problems (RS-FCP) and the dual column-sparse fraction packing problems (CS-FPP). Compared with the best-known $(1+\epsilon)$-approximation CONGEST algorithm for RS-FCP/CS-FPP developed by Kuhn, Moscibroda, and Wattenhofer (SODA'06), our algorithm is not only much simpler but also significantly improves the dependency on $\epsilon$.
\end{abstract}

\section{Introduction}\label{sec:introduction}
A \emph{fractional covering} problem (FCP) and its dual \emph{fractional packing} problem (FPP) are positive linear programs (LP) of the canonical form:
\begin{flalign*}
&&\min_{\vec x} \quad &\vec{c}^T\vec{x} \quad\quad\mbox{(FCP, the primal LP)} & \max_{\vec y} \quad &\vec{b}^T\vec{y}\quad\quad\mbox{(FPP, the dual LP)}&& \\
&&\mbox{s.t.}\quad &\vec{A}\vec{x}\geq \vec{b} & \mbox{s.t.}\quad &\vec{A}^T\vec{y}\leq \vec{c}&&\\
&& &\vec{x}\geq 0 & &\vec{y}\geq 0,&&
\end{flalign*}

where all $A_{ij},b_i$, and $c_j$ are non-negative. 
This paper particularly focuses on $k$-row-sparse FCPs ($k$-RS-FCP) and $k$-column-sparse FPPs ($k$-CS-FPP). These are FCPs and FPPs in which the matrix $\vec{A}$ contains at most $k$ non-zero entries per row. They are still fairly general problems and can model a broad class of basic problems in combinatorial optimization, such as the fractional version of vertex cover, bounded-frequency weighted set cover, weighted $k$-uniform hypergraph matching, stochastic matching, and stochastic $k$-set packing.

This paper studies distributed algorithms for FCPs and FPPs in the CONGEST model. The CONGEST model features a network $G=(V, E)$, where each node corresponds to a processor and each edge $(u,v)$ represents a bidirectional communication channel between processors $u$ and $v$. The computation proceeds in rounds. In one round, each processor first executes local computations and then sends messages to its neighbors. Each of the messages is restricted to $O(\log |V|)$ bits. The algorithm complexity is measured in the number of rounds it performs. 

For an FCP/FPP instance with $m$ primal variables and $n$ dual variables (i.e., $A$ has dimensions $n\times m$), the network is a bipartite graph $G=([m],[n], E)$. 
Each primal variable $x_j$ is associated with a left node $j \in [m]$, and each dual variable $y_i$ is associated with a right node $i \in [n]$. An edge $(i, j)$ exists if and only if $A_{ij}>0$. At the beginning of a CONGEST algorithm, each left node $j$ only knows its corresponding cost $c_j$ and the column vector $(A_{ij}: i \in [n])$, and each right node $i$ only knows $b_i$ and the row vector $(A_{ij}: j \in [m])$. At the end of the algorithm, each left node $j$ is required to output a number $\hat{x}_j$ and each right node $i$ to output a $\hat{y}_i$, which together are supposed to form approximate solutions to the FCP and FPP respectively. 

Let $\vec{1}_k$ denote the $k$-dimensional all-ones vector. The subscript $k$ will be dropped if it is implicit. Without loss of generality, this paper considers FCP and FPP instances of the following normal forms: 
\begin{equation}\label{normalfcp}
\min_{\vec x} \quad \vec{1}^T\vec{x}\quad \mbox{s.t.}\quad \vec{A}\vec{x}\geq \vec{1} \mbox{ and } \vec{x}\geq 0
\end{equation}
and 
\begin{equation}\label{normalfpp}
\max_{\vec y} \quad \vec{1}^T\vec{y}\quad \mbox{s.t.}\quad \vec{A}^T\vec{y}\leq \vec{1} \mbox{ and } \vec{y}\geq 0
\end{equation}
where $A_{ij}$ is either 0 or $\geq 1$. The reduction to the normal form proceeds as follows:
\begin{itemize}[leftmargin=24pt]
\item First, we can assume that each $b_i>0$ since otherwise we can set $y_i$ to zero and delete the $i$-th row of $A$; 

similarly, we can assume each $c_j>0$ since otherwise we can set $x_j$ to $+\infty$ and delete the $j$-th column of $A$, and then set $y_i$ to zero and delete the $i$-th row for all $i$ with $A_{ij}>0$. 
\item Then, we can replace $A_{ij}$ by $\hat{A}_{ij}=\frac{A_{ij}}{b_ic_j}$, replace $\vec{b}$ and $\vec{c}$ by all-ones vector, and work with variables $\hat{x}_j=c_jx_j$ and $\hat{y}_i=b_iy_i$. 
\item Finally, we replace $\hat{A}_{ij}$ with $\tilde{A}_{ij}=\frac{\hat{A}_{ij}}{\min\{\hat{A}_{i'j'}\mid \hat{A}_{i'j'> 0}\}}$ and work with $\tilde{x}_j=\hat{x}_j\cdot \min\{\hat{A}_{i'j'}\mid \hat{A}_{i'j'> 0}\}$ and $\tilde{y}_i=\hat{y}_i\cdot \min\{\hat{A}_{i'j'}\mid \hat{A}_{i'j'> 0}\}$.
\end{itemize}

Papadimitriou and Yannakakis \cite{papa} initiated the research on approximating FCPs/FPPs in the CONGEST model. Bartal, Byers, and Raz \cite{Bartal} proposed the first constant approximation ratio algorithm with $\mathrm{polylog}(m+n)$ rounds. After designing a distributed algorithm for a specific FCP/FPP scenario, namely the fractional  dominating set problem \cite{KuhnW05},  Kuhn, Moscibroda, and Wattenhofer \cite{soda06} finally developed an efficient $(1+\epsilon)$-approximation algorithm for general FCP/FPP instances, running in $O(\log \Gamma_p \cdot \log\Gamma_d /\epsilon^4)$ round for normalized instances where
 \[
\Gamma_p \coloneq \max_j \sum_{i=1}^{n}A_{ij}\mbox{, and }\Gamma_d \coloneq \max_i \sum_{j=1}^m A_{ij}.
\]
Particularly, for RS-FCPs/CS-FPPs, the round complexity becomes $O\left(\log A_{\max}\cdot\log \Gamma_p/\epsilon^4\right)$. Later, Awerbuch and Khandekar \cite{stoc08} proposed another $(1+\epsilon)$-approximation algorithm for general normalized FCP/FPP instances running in $\tilde{O}(\log^2(nA_{\max})\log^2(nmA_{\max})/\epsilon^5)$ rounds, which has worse bound than \cite{soda06} but enjoys the features of simplicity and statelessness. 

Several works studied the lower bound for CONGEST algorithms to approximate linear programming. Bartal, Byers, and Raz \cite{Bartal} showed that $(1+\epsilon)$-approximation algorithms for general FCPs/FPPs require at least $\Omega(1/\epsilon)$ rounds. Kuhn, Moscibroda, and Wattenhofer proved that \cite{Kuhn2005,Kuhn16,DBLP:conf/podc/KuhnMW04,soda06,LenzenW10} no constant round, constant-factor approximation CONGEST algorithms exist for the LP relaxation of minimum vertex cover, minimum dominating set, or maximum matching in general graphs. Later, they improved this lower bound by showing that \cite{KuhnMW16} no $o(\sqrt{\log (m+n)/\log\log (m+n)})$ rounds CONGEST algorithms can constant-factor approximate the LP relaxation of minimum vertex cover, maximum matching, or by extension, the general binary RS-FCPs or CS-FPPs (i.e., $A_{ij}\in\{0,1\}$) as well.

In this paper, we propose a CONGEST algorithm (Algorithm \ref{alg:main}) for approximating general RS-FCP/CS-FPP instances, which is much simpler than the algorithm of \cite{soda06}. Moreover, our algorithm exhibits a worse dependency on $A_{\max}$ but improves the dependency on $\epsilon$. In particular, for the binary RS-FCPs or CS-FPPs, which include LP relaxations of many combinatorial problems such as minimum vertex cover, minimum dominating set, maximum matching, and maximum independent set, our algorithm runs in $O(\log\Gamma_p/\epsilon^2)$ rounds, which is a  $1/\epsilon^2$ factor improvement over the algorithm of \cite{soda06}.
\begin{restatable}[Main Theorem]{theorem}{maintheorem}
\label{thm:main_thm}
For any $\epsilon>0$, Algorithm \ref{alg:main} computes $(1+\epsilon)$-approximate solutions to RS-FCP and CS-FPP at the same time, running in $O(A_{\max}\cdot \log\Gamma_p/\epsilon^2)$ rounds.
\end{restatable}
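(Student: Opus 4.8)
The plan is to argue via LP duality. For any primal-feasible $\hat{x}$ and dual-feasible $\hat{y}$, weak duality gives $\sum_i \hat{y}_i \le \mathrm{OPT} \le \sum_j \hat{x}_j$, where $\mathrm{OPT}$ denotes the common optimal value of the two normal-form LPs. Hence it suffices to produce, at termination, a pair $(\hat{x},\hat{y})$ that is simultaneously primal- and dual-feasible and satisfies $\sum_j \hat{x}_j \le (1+\epsilon)\sum_i \hat{y}_i$: feeding this inequality back into the duality chain immediately certifies that $\hat{x}$ is a $(1+\epsilon)$-approximation for RS-FCP and $\hat{y}$ a $(1+\epsilon)$-approximation for CS-FPP. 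I would therefore organize the correctness proof around three claims --- primal feasibility, dual feasibility, and the $(1+\epsilon)$ objective gap --- and handle the round complexity separately.

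First I would establish \emph{primal feasibility}: the algorithm keeps raising the left-node values associated with covering constraints that are not yet met, so the stopping rule should force $\sum_j A_{ij}\hat{x}_j \ge 1$ at every right node $i$ (if termination is instead by a fixed iteration budget, one argues that by that point every constraint has been driven above $1$, up to a final deterministic scaling). Next, \emph{dual feasibility}: the $\hat{y}_i$ are assembled from the accumulated weight each constraint has placed on its variables, normalized so that $\sum_i A_{ij}\hat{y}_i \le 1$ at every left node $j$. Both are purely local conditions, so each reduces to checking an invariant maintained by the updates at a single node.

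The core of the argument is the \emph{objective gap}. Here I would set up a potential/charging argument in the multiplicative-weights style: each increment to a primal variable is charged against a simultaneous gain in the dual objective $\sum_i \hat{y}_i$, with the update rule tuned so that the accumulated over-charge never exceeds a factor $(1+\epsilon)$. Row-sparsity enters precisely at this point --- since each covering constraint touches at most $k$ variables, a single update perturbs the weighted constraints by a controlled amount, which bounds the effective width of the process and hence the loss incurred per step. The parameter $A_{\max}$ governs the admissible step size (increments scaled like $1/A_{\max}$ so that no constraint overshoots), which is why it enters the iteration count multiplicatively rather than logarithmically.

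Finally, for the \emph{round complexity}, every iteration is implementable in $O(1)$ CONGEST rounds on the bipartite network, because each update requires only that a node aggregate values from its immediate neighbors and the resulting messages fit in $O(\log(m+n))$ bits. The number of iterations is then bounded by observing that the primal variables increase geometrically over a dynamic range controlled by $\Gamma_p$, yielding the stated $O(A_{\max}\log\Gamma_p/\epsilon^2)$ bound. The step I expect to be the main obstacle is the objective gap: certifying a $(1+\epsilon)$ ratio \emph{and} the claimed round bound simultaneously forces a tight coupling between the step size, $A_{\max}$, and $\epsilon$, and one must verify that the purely local, distributed updates do not accumulate primal-dual error faster than the global potential argument permits --- in particular that dual feasibility can be maintained, or restored by a single uniform scaling, without spending extra rounds.
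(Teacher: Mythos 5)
Your high-level skeleton does match the paper's: both arguments sandwich the optimum via weak duality, prove primal feasibility from the scaling rule, prove dual feasibility after a uniform scaling, implement each phase in $O(1)$ CONGEST rounds, and bound the number of phases by $O(\Gamma_d\cdot\log\Gamma_p/\epsilon^2)$, which is $O(A_{\max}\cdot\log\Gamma_p/\epsilon^2)$ for row-sparse instances since $\Gamma_d\leq k\cdot A_{\max}$. But there is a genuine gap, and it sits exactly where you put the least detail — and, notably, you assign the difficulty to the wrong claim. In Algorithm \ref{alg:main}, each unit increment of $x_S$ distributes exactly one unit of mass among the $y_e$ (the increments $A_{eS}r_e/\sum_{e\in S}A_{eS}r_e$ sum to $1$ over $e\in S$), so the raw objectives satisfy $\sum_S x_S=\sum_e y_e$ \emph{exactly} at all times (Proposition \ref{observations}(a)); the ``objective gap'' you flag as the main obstacle is a one-line invariant. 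Conversely, the step you dismiss as ``purely local \ldots{} checking an invariant maintained by the updates at a single node'' — dual feasibility — is the technical heart of the matter. The raw $y_e$ are \emph{not} dual feasible: one must show the accumulated violation is bounded, i.e.\ that $Y_S:=\sum_{e\in S}A_{eS}\,y_e^L\leq(1+\epsilon)f$ for every $S$, so that dividing $\vec{y}$ by $(1+\epsilon)f$ restores feasibility while the primal pays only $f$. The paper does this by bounding the per-phase increment $\Delta Y_S\leq\frac{\alpha^{\Gamma_d+1}}{\ln\alpha}\cdot\frac{\Delta\rho_S}{\rho_S}$ — using that any selected $S'$ overlapping $S$ has $\rho_{S'}\geq\rho_S/\alpha$, and that within one phase $r_e$ can shrink by at most $\alpha^{\Gamma_d}$ — and then summing over phases against the potential $\ln\rho_S$, which falls from at most $\ln\Gamma_p$ to at least $-f\ln\alpha$, yielding $Y_S^{\mathrm{end}}\leq\alpha^{\Gamma_d+1}f+\alpha^{\Gamma_d+1}\ln\Gamma_p/\ln\alpha\leq(1+\epsilon)f$ under the choices $\alpha=1+\epsilon/(c\Gamma_d)$ and $f=2\ln\Gamma_p/(\epsilon\ln\alpha)$.

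Your proposal names ``a potential/charging argument in the multiplicative-weights style'' but never supplies it, and the quantitative content is exactly what makes the $(1+\epsilon)$ ratio and the claimed round bound hold simultaneously: why the per-phase loss factor is $\alpha^{\Gamma_d+1}$, why $\alpha$ must therefore be $1+\Theta(\epsilon/\Gamma_d)$ (this is the correct version of your ``step size $1/A_{\max}$'' intuition, and is where the $A_{\max}$ factor in the round count really comes from), and why $f$ must be $\Theta(\Gamma_d\log\Gamma_p/\epsilon^2)$ so that the additive term $\ln\Gamma_p/\ln\alpha$ is only an $\epsilon$-fraction of $f$. Without these, what you have is a correct plan rather than a proof. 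A secondary inaccuracy: the iteration count is not driven by ``primal variables increasing geometrically'' but by the efficiencies $\rho_S$ decreasing geometrically — each phase reduces $\max_S\rho_S$ by a factor of $\alpha$ (Proposition \ref{observations}(c)), and $\rho_S$ lives in $\left(\alpha^{-f},\Gamma_p\right]\cup\{0\}$, giving $L=\lceil\log_\alpha\Gamma_p+f\rceil$ phases.
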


\begin{remark}
In 2018, Ahmadi et al. \cite{disc18} proposed a simple $(1+\epsilon)$-approximation distributed algorithm for the LP relaxation of minimum vertex cover and maximum weighted matching problems, which are special classes of 2-RS-FCPs and 2-CS-FPPs. They claimed the algorithm runs in $O(\log \Gamma_p/\epsilon^2)$ rounds, an $O(A_{\max})$ factor faster than our algorithm. Unfortunately, there is a flaw in their proof\footnote{In the proof of Lemma 5.2 in the full version, $Y_v^+$ should be defined as $y_e^+/w_e$ rather than $y_e^+$; $\frac{\alpha^{1/w_e}}{\alpha^{1/w_e}-1} \le \frac{\alpha}{\alpha-1}$ seems doubtful since $\frac{1}{w_e} > 1$ in their setting. }, and we do not know how to correct the proof to achieve the claimed bound.
\end{remark}

\section{Algorithms}

In this section, we present our algorithm (Algorithm \ref{alg:main}) and the analysis. Indeed, our algorithm applies to general FCP/FPP instances, and we will prove the following theorem: 

\begin{theorem} \label{thm:general_main_thm}
For any $\epsilon>0$, Algorithm \ref{alg:main} computes $(1+\epsilon)$-approximate solutions to \eqref{normalfcp} and \eqref{normalfpp} at the same time, running in $O \left(\Gamma_d \cdot \log \Gamma_p/\epsilon^2\right)$ rounds.
\end{theorem}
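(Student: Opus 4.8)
The plan is to read Algorithm \ref{alg:main} as a primal--dual multiplicative-weights procedure in which the covering vector $\vec x$ grows monotonically while the packing vector is maintained implicitly as an exponential of the covering slacks. Concretely, for each row $i$ I track the load $s_i \coloneqq (\vec A\vec x)_i = \sum_j A_{ij}x_j$ and set $y_i \coloneqq \exp(-\eta\, s_i)$ for a learning rate $\eta = \Theta(\log\Gamma_p/\epsilon)$ fixed later. Each left node $j$ increments $x_j$ by a small step $\delta$ exactly when the left-hand side of its dual packing constraint, $\sum_{i:A_{ij}>0} A_{ij}y_i$, still lies below the feasibility threshold $1$; otherwise it freezes. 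Since node $j$ can compute $\sum_{i:A_{ij}>0}A_{ij}y_i$ from its right-neighbors and node $i$ can compute $s_i$ from its left-neighbors, each iteration is realizable in $O(1)$ CONGEST rounds, and the whole task reduces to bounding the number of iterations and the final approximation quality.

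For the iteration count I would first prove a one-step potential inequality for $\Phi(\vec x)\coloneqq \sum_i y_i$. Choosing the step so that $\eta\,\delta\,\Gamma_d = O(\epsilon)$ guarantees that every load $s_i$ moves by at most $O(\epsilon/\eta)$ per round, which keeps the linearization $\exp(-z)\approx 1-z$ accurate to a $(1\pm O(\epsilon))$ factor; this is precisely where the width $\Gamma_d=\max_i\sum_j A_{ij}$ enters, forcing the conservative uniform step $\delta\sim \epsilon/(\eta\Gamma_d)$. The same conservatism dictates the round count: a lagging constraint fed by a single unit-weight variable gains only $\delta$ per round, so it needs about $1/\delta\sim \Gamma_d\log\Gamma_p/\epsilon^2$ rounds to reach $s_i\ge 1$, and monotonicity of $\vec x$ together with the potential inequality shows every constraint is satisfied within this many rounds. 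The appearance of $\log\Gamma_p$ rather than $\log n$ is the delicate point and the place I expect the real work: because the normalization forces $A_{ij}\ge 1$, each column has at most $\Gamma_p$ nonzero entries, so the MWU ``number of experts'' seen by any left node is its column support $\le\Gamma_p$; I would therefore run the potential argument per left node over its $\le\Gamma_p$ neighbors and verify that these local accounts compose into the global duality-gap bound, yielding $\eta=\Theta(\log\Gamma_p/\epsilon)$ in place of the naive $\log n$.

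To certify the simultaneous $(1+\epsilon)$ guarantee I would close the gap by weak duality. The freezing rule ensures no dual packing constraint is overloaded by more than a $(1+O(\epsilon))$ factor during the run, so a single global rescaling of the terminal weights by $1/(1+O(\epsilon))$ produces an FPP-feasible $\hat{\vec y}$; a standard consequence of the one-step inequality is that primal cost and rescaled dual objective track each other, $\vec 1^T\hat{\vec x}\le (1+O(\epsilon))\,\vec 1^T\hat{\vec y}$. Since $\mathrm{OPT}_{\mathrm{FCP}}=\mathrm{OPT}_{\mathrm{FPP}}$ is sandwiched between $\vec 1^T\hat{\vec y}$ and $\vec 1^T\hat{\vec x}$, both outputs are $(1+O(\epsilon))$-approximate, and a reparametrization $\epsilon\mapsto\Theta(\epsilon)$ gives the claim.

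The main obstacle, as flagged above, is the joint control of feasibility and optimality under the localized potential: driving all loads to $s_i\ge 1$ pushes the weights $y_i$ down and threatens the packing constraints, while keeping the packing side feasible caps how aggressively $\vec x$ may grow, and both must be balanced through the single rate $\eta$. Verifying that the per-column ($\log\Gamma_p$) localization is consistent with the global primal--dual accounting is the crux; by contrast, the telescoping of the potential inequality, the step-size choice $\delta\sim\epsilon/(\eta\Gamma_d)$, and the $O(1)$-round CONGEST implementation are comparatively routine once the one-step inequality is established.
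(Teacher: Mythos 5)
Your proposal does not prove the stated theorem, because Theorem \ref{thm:general_main_thm} is a claim about Algorithm \ref{alg:main} specifically, and what you analyze is a different algorithm. Algorithm \ref{alg:main} is a parallel greedy procedure: it maintains explicit, monotonically \emph{increasing} dual variables $y_e$ and a requirement $r_e$ per element that shrinks multiplicatively by $\alpha^{A_{eS}}$; in each phase it increments $x_S$ by a full unit for every $S$ whose efficiency $\rho_S=\sum_{e\in S}A_{eS}r_e$ is within a factor $\alpha$ of the maximum over intersecting sets, and distributes that unit among the $y_e$ proportionally to $A_{eS}r_e$. Your scheme instead maintains $y_i=\exp(-\eta s_i)$ as a \emph{decreasing} function of the primal loads, grows $x_j$ by small steps $\delta$ subject to a freezing threshold, and runs a multiplicative-weights potential argument. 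These are not reparametrizations of one another: in the paper's algorithm the exact invariant $\sum_S x_S=\sum_e y_e$ holds throughout (Proposition \ref{observations}(a)), which is what makes the final approximation argument a one-line consequence of weak duality; your weights satisfy no such invariant. Even a fully rigorous version of your argument would therefore prove a different theorem (that some MWU-style algorithm attains the stated bound), not a statement about Algorithm \ref{alg:main}.

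Moreover, taken on its own terms the proposal leaves its central claims unproven, and they are not routine. The two load-bearing steps --- that a per-column localization legitimately replaces $\log n$ by $\log\Gamma_p$ in the learning rate, and that every covering constraint reaches $s_i\ge 1$ within $O(\Gamma_d\log\Gamma_p/\epsilon^2)$ rounds despite freezing --- are exactly the points you yourself flag as ``where I expect the real work'' and ``the crux.'' The second is genuinely problematic: with threshold freezing, a row $i$ all of whose incident columns freeze (because their packing constraints fill up) need not ever reach $s_i\ge 1$, and standard MWU covering analyses handle this via a global potential over all $n$ rows, which is precisely what produces $\log n$ rather than $\log\Gamma_p$. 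By contrast, the paper's proof needs no global potential at all: feasibility of $\vec{x}^L/f$ follows because each $r_e$ is driven from $1$ to below $\alpha^{-f}$, so $\sum_{S\ni e}A_{eS}x_S^L\ge f$; feasibility of the scaled duals follows from the per-set inequality $\Delta Y_S\le \frac{\alpha^{\Gamma_d+1}}{\ln\alpha}\cdot\frac{\Delta\rho_S}{\rho_S}$ summed over phases and bounded by an integral, using the parameter choices \eqref{alpha+f+L}; and the approximation ratio is immediate from $\sum_S x_S^L=\sum_e y_e^L$. To salvage your approach you would have to present it as a new algorithm and actually prove the localization lemma and the termination claim; as a proof of Theorem \ref{thm:general_main_thm} it cannot stand.
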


\begin{algorithm}[t]
	\caption{An $(1+\epsilon)$-approximation algorithm for the normalized FCP/FPP}
	\label{alg:main}
	\textbf{Parameter:} $\alpha,f\in\mathbb{R}_{\geq 0}$ and $L\in\mathbb{N}$ defined as in \eqref{alpha+f+L};\\
     Initialize $x_S:=0$ for any $S$, and $y_e:=0$ and $r_e:=1$ for any $e$;\\
     \For{$\ell=1$ to $L$}{
         \For{each all $S$ in parallel}{
              \If{$\rho_S\geq \frac{1}{\alpha} \cdot \max_{S'\cap S\neq\emptyset} \rho_{S'}$}{
                    $x_S:=x_S+1$;\\
                    \For{all $e\in S$}{
                          $y_e:=y_e+A_{eS}\cdot r_e/\left(\sum_{e\in S}A_{eS}\cdot r_e\right)$ and $r_e:=r_e/\alpha^{A_{eS}}$;\\
                          \lIf{$r_e\leq \alpha^{-f}$}{$r_e:=0$}
                    }
               }
         }
     }
    \textbf{Return:} $\vec{x}/f$ and $\vec{y}/(f\cdot (1+\epsilon))$ as the approximate solutions to \eqref{setcover} and \eqref{setcover-dual} respectively.
\end{algorithm}

Our algorithm is based on the sequential fractional set cover algorithm by Eisenbrand et al. \cite{DBLP:conf/soda/EisenbrandFGK03} and the fractional weighted bipartite matching by Ahmadi et al. \cite{disc18}. It will be helpful to view \eqref{normalfcp} as a generalization of the fractional set cover problem. Specficially, there is a universe $U=\{e_1,\cdots,e_n\}$ of $n$ elements, a collection $\mathcal{S}=\{S_1,S_2,\cdots,S_m\}$ of subsets of $U$, and a matrix $\{A_{eS}:e\in U,S\in\mathcal{S}\}$ indicating the covering efficiency of $S$ on $e$. We say $e\in S$ if $A_{eS}>0$. Then \eqref{normalfcp} can be recast as the following generalization of the fractional set cover problem:
\begin{equation}\label{setcover}
\min_{\vec{x}\geq 0} \quad \sum_{S\in \mathcal{S}} x_S\quad \mbox{s.t.}\quad \sum_{S\ni e} A_{eS}\cdot x_S\geq 1 \mbox{ for any $e$}.
\end{equation}
The dual \eqref{normalfpp} can be recast as:
\begin{equation}\label{setcover-dual}
\max_{\vec{y}\geq 0} \quad \sum_{e\in U} y_e\quad \mbox{s.t.}\quad \sum_{e\in S} A_{eS}\cdot y_e\leq 1 \mbox{ for any $S$}.
\end{equation}

Our algorithm maintains a variable $x_S$ for each subset $S$, and two variables $y_e$ and $r_e$ for each element $e$. $x_S$ and $y_e$ are initially $0$ and their values can only increase throughout the algorithm; the variable $r_e$ is initially $1$ and its value can only decreases. Intuitively, $r_e$ denotes the ``requirement'' of element $e$. Furthermore, we define the efficiency of $S$ as $\rho_S:=\sum_{e\in S} A_{eS}\cdot r_e$.

Our algorithm consists of $L$ phases. $\alpha$ and $f$ are two other algorithmic parameters. The values of $L,\alpha,f$ will be determined later. In the $\ell$-th phase,  the algorithm picks all subsets $S$ with 
\[
\rho_S\geq \frac{1}{\alpha} \cdot \max_{S'\cap S\neq\emptyset} \rho_{S}'.
\]
and update the primal variable
\[
x_S:=x_S+1,
\]
as well as: for each $e\in S$,
\[
y_e:=y_e+\frac{A_{eS}\cdot r_e}{\sum_{e\in S}A_{eS}\cdot r_e}, \mbox{ and } r_e:=r_e/\alpha^{A_{eS}}.
\]

In other words, let $\Xi_\ell (e):=\{S\ni e\mid S$ is selected in the $\ell$-th phase$\}$, then after the $\ell$-th phase, we have
\[
y_e:=y_e+\Delta y_e=y_e+\sum_{S\in \Xi_\ell(e)}\frac{A_{eS}\cdot r_e}{\sum_{e\in S}A_{eS}\cdot r_e}, \mbox{ and } r_e:=r_e/\alpha^{\sum_{S\in \Xi_\ell (e)} A_{eS}}.
\]
Besides, we set $r_e=0$ as soon as $r_e\leq \alpha^{-f}$. Finally, the algorithm returns $\vec{x}/f$ and $\vec{y}/\left((1+\epsilon)\cdot f\right)$ as the approximation solutions. See Algorithm \ref{alg:main} for a formal description.
\begin{remark}
	The two algorithms in \cite{soda06} and \cite{disc18} both have a similar greedy fashion: it starts with all $x_S$ set to $0$, always increases the $x_S$ whose ``efficiency'' is maximum up to a certain factor, and then distributes the increment of $x_S$ among its elements and decreases the requirements $r_e$. Our algorithm and the two algorithms of \cite{soda06} and \cite{disc18} differ in specific implementations: the definition of efficiency, the distribution of increments, and the reduction of requirements. In particular, the algorithm of \cite{soda06} consists of two levels of loops: the goal of the first-level loop is to reduce the maximum ``weighted primal degree'', and one complete run of the second-level loop can be seen as one parallel greedy step. This two-level structure complicates the algorithm of \cite{soda06}. The algorithm of \cite{disc18} only works for the LP relaxation of minimum vertex cover and maximum weighted matching problems, which are special classes of 2-RS-FCPs and 2-CS-FPPs. The distribution of increments and the reduction of requirements in our implementation is similar to \cite{disc18}, and the main difference is the definition of efficiency.  
\end{remark}

Before analyzing the correctness and efficiency of the algorithm, we present some helpful observations about its behavior.
\begin{proposition}\label{observations}
Throughout the algorithm, we always have 
\begin{itemize}[leftmargin=24pt]
\item[(a)] $\sum_{S} x_S = \sum_e y_e$.
\item[(b)] For any $S$, the value of $\rho_S$ is non-increasing, and lies within $\left(\alpha^{-f}, \Gamma_p\right] \cup \{0\}$.
\item[(c)] After each phase, $\max_{S}\rho_S$ decreases by a factor of at least $\alpha$. 
\end{itemize}
\end{proposition}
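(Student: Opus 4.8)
The three assertions are all invariants preserved from phase to phase, so I would verify each by isolating the effect of a single phase, treating them in the stated order.

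For (a), both $\sum_S x_S$ and $\sum_e y_e$ start at $0$, so it suffices to show that each selected subset raises both sides by exactly $1$ during a phase. Selecting $S$ increments $x_S$ by $1$, contributing $1$ to the left side. On the right side I would sum the increments $\Delta y_e$ over all $e$ and regroup by the selecting subset, rewriting the total phase increment as $\sum_{S \text{ selected}} \sum_{e \in S} \frac{A_{eS} r_e}{\sum_{e' \in S} A_{e'S} r_{e'}}$; the inner sum is identically $1$ since its numerator and denominator both sum to $\rho_S$. Hence both sides grow by the number of selected subsets. The only delicate point is the degenerate case $\rho_S = 0$, where the update is an undefined $0/0$; I would observe that such an $S$ has $r_e = 0$ for all $e \in S$ (every element already satisfied) and may be excluded from selection harmlessly, so we may assume every selected $S$ has $\rho_S > 0$.

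For (b), monotonicity is immediate: $\rho_S = \sum_{e \in S} A_{eS} r_e$ and each $r_e$ is non-increasing, so $\rho_S$ never increases. Initially $r_e = 1$, giving $\rho_S = \sum_{e \in S} A_{eS} \leq \Gamma_p$ by the definition of $\Gamma_p$, and monotonicity propagates $\rho_S \leq \Gamma_p$ thereafter. For the positive lower bound I would invoke the thresholding rule: the moment $r_e$ falls to $\alpha^{-f}$ or below it is zeroed, so any surviving positive $r_e$ satisfies $r_e > \alpha^{-f}$. If $\rho_S > 0$ then some $e \in S$ has $r_e > 0$, hence $r_e > \alpha^{-f}$, and since the normal form forces $A_{eS} \geq 1$ whenever $A_{eS} > 0$, that single term already yields $\rho_S \geq A_{eS} r_e > \alpha^{-f}$.

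Part (c) is the crux. Fixing a phase and letting $M = \max_S \rho_S$ at its start, I would bound the end-of-phase value of every subset $T$ by $M/\alpha$ via a two-case split. If $T$ is selected, each $e \in T$ has its requirement cut by a factor of at least $\alpha^{A_{eT}} \geq \alpha$ (using $A_{eT} \geq 1$ and $\alpha > 1$), while cuts from other selected subsets sharing $e$, or from thresholding, only shrink $r_e$ further; hence $\rho_T$ drops to at most $\rho_T/\alpha \leq M/\alpha$. If $T$ is not selected, the selection criterion fails, so $\rho_T < \frac{1}{\alpha}\max_{S' \cap T \neq \emptyset} \rho_{S'} \leq M/\alpha$, because the neighborhood maximum is at most the global maximum $M$; monotonicity from (b) then keeps the end-of-phase $\rho_T \leq M/\alpha$. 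Taking the maximum over all $T$ gives $\max_S \rho_S \leq M/\alpha$, exactly the claimed decay. The main obstacle here is conceptual rather than computational: recognizing that the purely local selection rule—each subset compared only against its intersecting neighbors, whose maximum cannot exceed the global maximum—is precisely what drives the global maximum down by the factor $\alpha$, while the exponent $A_{eT} \geq 1$ is what guarantees selected subsets drop by at least $\alpha$ as well.
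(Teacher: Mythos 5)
Your proof is correct and follows essentially the same route as the paper's: telescoping the unit increments for (a), monotonicity of $r_e$ plus the thresholding rule and $A_{eS}\geq 1$ for (b), and for (c) the observation that the neighborhood maximum is bounded by the global maximum, so every set either gets selected (and drops by a factor $\alpha$) or was already below $M/\alpha$. Your handling of the degenerate $\rho_S=0$ selection in (a) is a small extra care the paper omits, but it does not change the argument.
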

\begin{proof}
\textbf{Part (a).} Initially, $\sum_{S} x_S = \sum_e y_e=0$. Then, whenever we increase $x_S$ by $1$, we increase $\sum_{e} y_e$ by 
$\sum_{e\in S} \frac{A_{eS}\cdot r_e}{\sum_{e\in S}A_{eS}\cdot r_e}=1$.

\textbf{Part (b).} Since $r_e$ is non-increasing, so is $\rho_S:=\sum_{e\in S} A_{eS}\cdot r_e$. Besides, the initial value of $\rho_S$ is $\sum_{e\in S} A_{eS}$, which is upper-bounded by $\Gamma_p$. Furthermore, for any non-zero $r_e$ where $e\in S$, it should be strictly greater than $\alpha^{-f}$, since otherwise it will be set to 0. Recalling that $A_{eS} \geq 1$, we have $\rho_S>\alpha^{-f}$ or $\rho_S=0$. 

\textbf{Part (c).} Define $\rho_{\max}:=\max_{S}\rho_S$. Note that every $S$ with $\rho_S\geq \rho_{\max}/\alpha$ will be picked. Then for any such $S$, $r_e$ will decrease by a factor of $\alpha^{\sum_{S'\in \Xi_\ell (e)} A_{eS}}\geq \alpha^{A_{eS}}\geq \alpha$ for any $e\in S$, so $\rho_S=\sum_{e\in S} A_{eS}\cdot r_e$ decreases by a factor of at least $\alpha$ as well. 
\end{proof}

By Proposition \ref{observations} (b) and (c), it is easy to see that after $\lceil\log_{\alpha} \Gamma_p+f\rceil$ phases, all $\rho_S$ and $r_e$ will become zero. We choose \footnote{The reason behind the choices of parameters is that $\alpha$ should sufficiently close to $1$ and $f$ should sufficiently large, such that $\alpha^{\Gamma_d+1} = 1 + O(\epsilon)$ and $\frac{\ln \Gamma_p}{\ln \alpha} \ll f$. See the proof of \Cref{thm:correctness} for details. }
\begin{equation}\label{alpha+f+L}
\alpha:=1+\frac{\epsilon}{c\cdot \Gamma_d}, \mbox{ and } f:=\frac{2}{\epsilon\cdot \ln \alpha}\cdot \ln \Gamma_p, \mbox{ and } 
L:=\lceil\log_{\alpha} \Gamma_p+f\rceil.
\end{equation}
where $c$ is a sufficiently large constant. 
Note that each phase can be implemented in constant rounds. So the following lemma holds.
\begin{lemma}\label{lem:complexity}
Algorithm \ref{alg:main} runs in $O(\Gamma_d\cdot \log \Gamma_p/\epsilon^2)$ rounds. When it terminates, all $r_e=0$ and all $\rho_S=0$.
\end{lemma}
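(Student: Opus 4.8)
The plan is to treat the two assertions separately, reading the round count off the phase count and the termination off Proposition~\ref{observations}. Since the excerpt already records that each phase runs in a constant number of CONGEST rounds---each set $S$ can learn $\max_{S'\cap S\neq\emptyset}\rho_{S'}$ by a two-hop exchange in which every element $e$ forwards to its incident sets a single aggregated value $\max_{S'\ni e}\rho_{S'}$, after which all updates are local---the total cost is $O(L)$. Hence the quantitative claim reduces entirely to bounding $L=\lceil \log_\alpha\Gamma_p+f\rceil$.

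First I would substitute the definitions in \eqref{alpha+f+L} to obtain
\[
L \;\le\; \frac{\ln\Gamma_p}{\ln\alpha}+f+1 \;=\; \frac{\ln\Gamma_p}{\ln\alpha}\Bigl(1+\tfrac{2}{\epsilon}\Bigr)+1,
\]
so everything hinges on showing $1/\ln\alpha=O(\Gamma_d/\epsilon)$. With $\alpha=1+\frac{\epsilon}{c\Gamma_d}$ and the elementary bound $\ln(1+x)\ge \frac{x}{1+x}$ (so $\ln(1+x)\ge x/2$ whenever $0<x\le 1$), the choice of a sufficiently large constant $c$ in \eqref{alpha+f+L} guarantees $\frac{\epsilon}{c\Gamma_d}\le 1$ and hence $\ln\alpha\ge \frac{\epsilon}{2c\Gamma_d}$, i.e. $1/\ln\alpha\le 2c\Gamma_d/\epsilon$. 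Plugging this in gives $L=O\!\left(\frac{\Gamma_d\log\Gamma_p}{\epsilon}\cdot\frac1\epsilon\right)=O\!\left(\Gamma_d\log\Gamma_p/\epsilon^2\right)$. This is the step that carries all the quantitative content: the factor $1/\epsilon$ from $f$ multiplies the factor $\Gamma_d/\epsilon$ from $1/\ln\alpha$, which is precisely where the $1/\epsilon^2$ originates.

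For termination, I would run the decay supplied by Proposition~\ref{observations}. Part~(c) shrinks $\rho_{\max}:=\max_S\rho_S$ by a factor of at least $\alpha$ per phase, and part~(b) bounds its initial value by $\Gamma_p$, so after $L\ge \log_\alpha\Gamma_p+f$ phases one has $\rho_{\max}\le \Gamma_p\cdot\alpha^{-L}\le \alpha^{-f}$. But part~(b) also asserts that every $\rho_S$ is either $0$ or strictly greater than $\alpha^{-f}$; the two facts together force $\rho_S=0$ for all $S$ at termination. Finally, since $\rho_S=\sum_{e\in S}A_{eS}r_e$ is a sum of nonnegative terms with every coefficient $A_{eS}\ge 1$, the vanishing of $\rho_S$ forces $r_e=0$ for each $e\in S$; as feasibility of the instance places every element in at least one set, all $r_e=0$.

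I do not expect a genuine obstacle here. The only points demanding mild care are, first, checking that the regime $\frac{\epsilon}{c\Gamma_d}\le 1$ in which $\ln(1+x)\ge x/2$ applies is exactly what the ``sufficiently large $c$'' secures, and second, confirming that the per-phase communication is genuinely $O(1)$ rather than scaling with set sizes---which holds because each element returns only one aggregated maximum to its incident sets.
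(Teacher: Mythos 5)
Your proposal is correct and follows essentially the same route as the paper, which disposes of the lemma in two sentences: termination via Proposition~\ref{observations}(b) and (c), and the round bound by noting each phase takes $O(1)$ rounds and bounding $L=\lceil\log_\alpha\Gamma_p+f\rceil$ using $1/\ln\alpha=O(\Gamma_d/\epsilon)$. You merely make explicit the calculations and the two-hop aggregation that the paper leaves implicit.
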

What remains is to prove its correctness. Let $\vec{x}^L$ and $\vec{y}^L$ denote the values of $\vec{x}$ and $\vec{y}$ right after the $L$-th phase. We first prove the feasibility.
\begin{theorem} \label{thm:correctness}
    $\vec{x}^L/f$ and $\vec{y}^L/\left((1+\epsilon)\cdot f\right)$ are feasible solutions to \eqref{setcover} and \eqref{setcover-dual} respectively.
\end{theorem}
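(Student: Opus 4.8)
The plan is to prove the two feasibility assertions separately, the primal one being essentially bookkeeping and the dual one carrying all the work.

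For primal feasibility I would exploit that $r_e$ is an exact ledger of how much $e$ has been covered. Since $r_e$ starts at $1$ and is multiplied by $\alpha^{-A_{eS}}$ each time a set $S\ni e$ is picked, at any moment before it is zeroed we have $r_e=\alpha^{-\sum_{S\ni e}A_{eS}x_S}$, where $x_S$ is the current count of selections of $S$. By Lemma \ref{lem:complexity} every $r_e$ is $0$ at termination, so each $e$ has a phase in which $r_e$ dropped to $\le\alpha^{-f}$ and was truncated; at that moment $\sum_{S\ni e}A_{eS}x_S\ge f$, and as the $x_S$ never decrease we get $\sum_{S\ni e}A_{eS}x_S^L\ge f$. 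Dividing by $f$ is exactly the covering constraint of \eqref{setcover} for $\vec x^L/f$.

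For dual feasibility I would fix a set $S$, use $\rho_S$ as a potential, and bound $\sum_{e\in S}A_{eS}y_e^L$ phase by phase. In phase $\ell$ only selected sets $S'$ meeting $S$ touch the relevant $y_e,r_e$, and the selection rule forces $\rho_{S'}\ge\rho_S/\alpha$ for each such $S'$ because $S$ itself competes in the $\max$. Writing $a_e=\sum_{S'\in\Xi_\ell(e)}A_{eS'}$ (which satisfies $a_e\le\Gamma_d$), the increment $\sum_{e\in S}A_{eS}\sum_{S'\in\Xi_\ell(e)}A_{eS'}r_e/\rho_{S'}$ is at most $(\alpha/\rho_S)\sum_{e\in S}A_{eS}r_ea_e$, while the drop of $\rho_S$ in the same phase is at least $\sum_{e\in S}A_{eS}r_e(1-\alpha^{-a_e})$, all evaluated at the phase-start values of $r_e$ that the algorithm in fact uses. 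The elementary estimate $1-\alpha^{-t}\ge t\ln\alpha\cdot\alpha^{-t}$ (i.e. $1-e^{-x}\ge xe^{-x}$) with $a_e\le\Gamma_d$ then converts the former into the latter, giving the per-phase inequality
\[
\Delta_\ell\Big(\textstyle\sum_{e\in S}A_{eS}y_e\Big)\ \le\ \frac{\alpha^{\Gamma_d+1}}{\ln\alpha}\cdot\frac{\rho_S^{\mathrm{in}}-\rho_S^{\mathrm{out}}}{\rho_S^{\mathrm{in}}},
\]
where $\rho_S^{\mathrm{in}},\rho_S^{\mathrm{out}}$ are the values of $\rho_S$ at the start and end of phase $\ell$. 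Summing over phases I would telescope the right-hand side with $1-x\le\ln(1/x)$ to bound it by $\ln(\rho_S^{\mathrm{init}}/\rho_S^{\mathrm{last}})\le\ln\Gamma_p+f\ln\alpha$, using $\rho_S^{\mathrm{init}}\le\Gamma_p$ and the last nonzero value $>\alpha^{-f}$ from Proposition \ref{observations}(b). The unique phase in which $\rho_S$ hits $0$ cannot be charged this way (the logarithm diverges), so I would bound its increment directly by $\alpha\Gamma_d$. Plugging in \eqref{alpha+f+L}, so that $\ln\Gamma_p/\ln\alpha=\tfrac\epsilon2 f$ while, for the constant $c$ large, $\alpha^{\Gamma_d+1}=1+O(\epsilon)$ and $\alpha\Gamma_d=O(\epsilon)f$, collapses everything to $(1+\epsilon)f$, which is exactly what makes $\vec y^L/((1+\epsilon)f)$ feasible for \eqref{setcover-dual}.

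The main obstacle is the conversion step in the dual argument: turning the additive increase of $\sum_{e\in S}A_{eS}y_e$ into the multiplicative decrease of $\rho_S$ while keeping the blow-up factor at $\alpha^{\Gamma_d+1}/\ln\alpha$ (and not something scaling with the number of phases) requires invoking the greedy rule $\rho_{S'}\ge\rho_S/\alpha$ and the tight two-sided control of $1-\alpha^{-t}$ simultaneously. The secondary subtlety is the truncation of $r_e$ to $0$: it only helps in ordinary phases, since it makes $\rho_S$ drop further and thus strengthens the inequality, but it breaks the telescoping precisely in the terminal phase, which is why that phase must be isolated and estimated crudely. Once this structure is set up, the remaining parameter bookkeeping—choosing $c$ large enough that the accumulated $O(\epsilon)$ slacks total at most $\epsilon$, under the harmless assumption $\ln\Gamma_p\ge1$—is routine.
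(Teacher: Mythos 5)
Your proposal is correct and follows essentially the same route as the paper's proof: the same primal ledger argument via $r_e=\alpha^{-\sum_{S\ni e}A_{eS}x_S}$, the same dual potential $Y_S=\sum_{e\in S}A_{eS}y_e$ with the identical per-phase inequality $\Delta Y_S\le \frac{\alpha^{\Gamma_d+1}}{\ln\alpha}\cdot\frac{\Delta\rho_S}{\rho_S}$ (using the greedy rule $\rho_{S'}\ge\rho_S/\alpha$ and $\alpha^t-1\ge t\ln\alpha$), and the same logarithmic telescoping down to $\ln\Gamma_p+f\ln\alpha$ before the parameter choice \eqref{alpha+f+L} closes the argument. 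If anything, your explicit isolation of the terminal phase---where truncation sends $\rho_S$ to $0$, the telescoping diverges, and you instead bound the increment directly by $\alpha\Gamma_d=O(\epsilon^2)f$---is \emph{more} careful than the paper, whose integral bound $\int_{\rho_S^{\mathrm{end}}}^{\rho_S^{\mathrm{initial}}}\rho^{-1}\,\mathrm{d}\rho$ silently identifies $\rho_S^{\mathrm{end}}$ with the last nonzero value $>\alpha^{-f}$ and leaves that final phase's contribution unaccounted for.
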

\begin{proof}
We first show the feasibility of $\vec{x}^L/f$. Obviously, $\vec{x}^L/f$ are non-negative. Given any $e$, whenever we increase $x_S$ by 1 for some $S\ni e$, we divide $r_e$ by a factor $\alpha^{A_{eS}}$. The initial value of $r_e$ is $1$; and by Lemma \ref{lem:complexity}, finally $r_e$ becomes $\leq \alpha^{-f}$, and then is set to $0$. We therefore have
\begin{equation*} 
\sum_{S\ni e} A_{eS}\cdot x_S^L\geq f,
\end{equation*}
and then conclude the feasibility of $\vec{x}^L/f$.

In the following, we prove the feasibility of $\vec{y}^L/f$. Obviously, $\vec{y}^L/((1+\epsilon)\cdot f)$ are non-negative. What remains is to show that for any $S$
\[
\sum_{e\in S} A_{eS}\cdot y_e^L \leq (1+\epsilon)\cdot f.
\]
For the convenience of presentation, define $Y_S:=\sum_{e\in S} A_{eS}\cdot y_e$, which only increases during the algorithm's execution. The idea is to upper bound the increment $\Delta Y_S$ of $Y_S$ in terms of the decrement $\Delta \rho_S$ of $\rho_S$ in each phase. 

On the one hand, recalling that the increment of $y_e$ is 
\[
\Delta y_e=\sum_{S'\in \Xi_\ell(e)}\frac{A_{eS'}\cdot r_e}{\sum_{e\in S'}A_{eS'}\cdot r_e}=\sum_{S'\in \Xi_\ell(e)} \frac{1}{\rho_{S'}}\cdot A_{eS'}\cdot r_e,
\]
we have
\begin{align*}
\Delta Y_S=&\sum_{e\in S} A_{eS}\cdot \Delta y_e=\sum_{e\in S} \sum_{S'\in \Xi_\ell(e)}\frac{1}{\rho_{S'}}\cdot A_{eS}\cdot A_{eS'}\cdot r_e\leq \sum_{e\in S} \sum_{S'\in \Xi_\ell(e)}\frac{\alpha}{\rho_{S}}\cdot A_{eS}\cdot A_{eS'}\cdot r_e\\
= & \frac{\alpha}{\rho_{S}}\sum_{e\in S} \sum_{S'\in \Xi_\ell(e)} A_{eS}\cdot A_{eS'}\cdot r_e.
\end{align*}

On the other hand, 
\begin{align*}
\Delta\rho_S=\sum_{e\in S} A_{eS}\cdot \Delta r_e=&\sum_{e\in S} A_{eS}\cdot r_e\cdot \left(1-1/\alpha^{\sum_{S'\in \Xi_\ell(e)}A_{eS'}}\right)\\
=&\sum_{e\in S} A_{eS}\cdot r_e\cdot \frac{1}{\alpha^{\sum_{S'\in \Xi_\ell(e)}A_{eS'}}}\left(\alpha^{\sum_{S'\in \Xi_\ell(e)}A_{eS'}}-1\right)\\
\geq &\sum_{e\in S} A_{eS}\cdot r_e\cdot \frac{1}{\alpha^{\sum_{S'\in \Xi_\ell(e)}A_{eS'}}}\left(\sum_{S'\in \Xi_\ell(e)}\ln\alpha\cdot  A_{eS'}\right)\\
\geq& \frac{\ln \alpha}{\alpha^{\Gamma_d}}\sum_{e\in S}\sum_{S'\in \Xi_\ell(e)}A_{eS}\cdot A_{eS'}\cdot r_e
    \end{align*}

    Combining the two inequalities above, we get
    \[ \Delta Y_S \leq \frac{\alpha^{\Gamma_d + 1}}{\ln \alpha} \cdot \frac{\Delta \rho_S}{\rho_S}.\]
   Then, summing this inequality up over all phases, we have
    \begin{align*}
        Y^{\text{end}}_S &= \sum_{\text{each phase}} \Delta Y_S \le \frac{\alpha^{\Gamma_d + 1}}{\ln \alpha} \sum_{\text{each phase}} \frac{\Delta \rho_S}{\rho_S} \le \frac{\alpha^{\Gamma_d + 1}}{\ln \alpha} \int_{\rho_S^{\text{end}}}^{\rho_S^{\text{initial}}} \frac{1}{\rho} \, \text{d} \rho \\
        &= \frac{\alpha^{\Gamma_d + 1}}{\ln \alpha} \cdot \left( \ln \rho_S^{\text{initial}} - \ln \rho_S^{\text{end}} \right)
        \le \frac{\alpha^{\Gamma_d + 1}}{\ln \alpha} \cdot \left( \ln \alpha \cdot f + \ln \Gamma_p \right) \\
        &= \alpha^{\Gamma_d + 1} f + \frac{\alpha^{\Gamma_d + 1} \ln \Gamma_p}{\ln \alpha} \le (1+\epsilon) f. \qedhere
    \end{align*}
\end{proof}
By Proposition \ref{observations} (a), we have $\sum_S x_S^L=\sum_e y_e^L$, which means
\[
\sum_e y_e^L/\left((1+\epsilon)\cdot f\right)\leq (1+\epsilon) \sum_S x_S^L/f.
\] 
We therefore conclude that  $\vec{x}^L/f$ and $\vec{y}^L/\left((1+\epsilon)\cdot f\right)$ are $(1+\epsilon)$-approximate solutions to \eqref{setcover} and \eqref{setcover-dual} respectively. By putting it and Lemma \ref{lem:complexity} together, we finish the proof of Theorem \ref{thm:general_main_thm}.

\section{Conclusion}
This paper proposes a simple $(1+\epsilon)$-approximation CONGEST algorithm for row-sparse fractional covering problems and column-sparse fractional packing problems. It runs in $O \left(A_{\max}\cdot \log \Gamma_p/\epsilon^2\right)$ rounds, where $\Gamma_p=\max_{j}\sum_i A_{ij}$ and $\Gamma_d=\max_{i}\sum_j A_{ij}$. Our algorithm is simpler than the algorithm of \cite{soda06}, worsens the $A_{max}$-dependency, but improves the $\epsilon$-dependency. 
For future work, 
it is an intriguing open problem, proposed by Suomela \cite{survey}, whether constant round, constant-factor approximation CONGEST algorithms exist for row-sparse, column-sparse FCP/FPP instances—a special kind of RS-FCP/CS-FPP where the number of nonzero entries in each column of $A$ is also bounded. Our algorithm and the algorithm of \cite{soda06} are both such algorithms for instances where $A_{\max}$ is bounded. 

\section*{Acknowledgement}
Qian Li's work is supported by Hetao Shenzhen-Hong Kong Science and Technology Innovation Cooperation Zone Project
(No.HZQSWS-KCCYB-2024016), and the National Natural Science Foundation of China Grants No.6200222.

\bibliographystyle{abbrv}
\bibliography{Distributed_LP_algorithms}

\end{document}